\documentclass[runningheads,a4paper]{llncs}

\usepackage{amssymb,amsmath,wasysym}
\setcounter{tocdepth}{3}
\usepackage{graphicx,epsfig}
\usepackage{alltt}
\usepackage{framed}
\usepackage{dsfont}

\usepackage{url}

\newcommand{\keywords}[1]{\par\addvspace\baselineskip
\noindent\keywordname\enspace\ignorespaces#1}

\let\phi\varphi
\let\epsilon\varepsilon

\newcommand{\bvp}[2]{\boxed{\begin{array}{l}#1\\#2\end{array}}}

\newcommand{\idop}{\text{\scshape\texttt{intdiffop}}}
\newcommand{\dop}{\text{\scshape\texttt{diffop}}}
\newcommand{\iop}{\text{\scshape\texttt{intop}}}
\newcommand{\itt}{\text{\scshape\texttt{intterm}}}
\newcommand{\bop}{\text{\scshape\texttt{boundop}}}
\newcommand{\evop}{\text{\scshape\texttt{evop}}}
\newcommand{\edop}{\text{\scshape\texttt{evdiffop}}}
\newcommand{\eiop}{\text{\scshape\texttt{evintop}}}
\newcommand{\eitt}{\text{\scshape\texttt{evintterm}}}
\newcommand{\bp}{\text{\scshape\texttt{bp}}}
\newcommand{\bc}{\text{\scshape\texttt{bc}}}
\newcommand{\gbp}{\text{\scshape\texttt{gbp}}}
\newcommand{\es}{\text{\scshape\texttt{es}}}

\newcommand{\D}[1]{\operatorname{D}^{#1}}
\newcommand{\Ev}[1]{\operatorname{E}\lbrack#1\rbrack}

\newcommand{\A}{\operatorname{A}}
\newcommand{\BC}{\operatorname{BC}}

\newcommand{\N}{\mathbb{N}}
\newcommand{\R}{\mathbb{R}}

\newcommand{\E}{\text{\textsc{E}}}

\newcommand{\der}{\partial}

\newcommand{\cum}{{\textstyle \varint}}

\newcommand{\galg}{\mathcal{F}}


\newcommand{\V}{\mathcal{F}}
\newcommand{\W}{\mathcal{G}}

\newcommand{\f}{f}        


\newcommand{\B}{\mathcal{B}}
\newcommand{\be}{\beta}
\newcommand{\Bo}{\B^{\perp}}
\newcommand{\C}{\mathcal{E}}

\newcommand{\fri}[1]{#1^\blacklozenge}

\DeclareMathOperator{\Ker}{Ker}
\DeclareMathOperator{\Img}{Im}
\DeclareMathOperator{\codim}{codim}

\newcommand{\tma}{TH$\exists$OREM$\forall$}


\newcommand{\evl}{\text{\scshape\texttt e}}

\newcommand{\ocum}{{\setbox0=%
    \hbox{$\textstyle{\scriptstyle-}{\varint}$}%
    \textstyle{\vcenter{\hbox{$\scriptstyle-$}}\kern-.5\wd0}%
    \!\varint}}

\newcommand{\intdiffop}{\galg[\der,\cum]}

\newcommand{\dirs}{\dotplus}

\newcommand{\id}{1}

\newcommand{\maple}{\textsc{Maple}}

\begin{document}

\mainmatter
\title{Regular and Singular Boundary Problems \\in \maple}

\author{Anja Korporal\inst{1}\thanks{Partially supported by the RISC PhD scholarship program of the government of Upper Austria} \and Georg Regensburger\inst{2}\thanks{Supported by the Austrian Science Fund (FWF): J 3030-N18.} \and Markus Rosenkranz\inst{3} }
\authorrunning{A. Korporal,  G. Regensburger, M. Rosenkranz}

\institute{ Johann Radon Institute for Computational and Applied Mathematics,\\
Austrian Academy of Sciences, Altenberger Str. 69, 4040 Linz, Austria\\[1ex]
\and
INRIA Saclay -- \^{I}le de France, Project DISCO, L2S, \\ Sup\'{e}lec, 91192 Gif-sur-Yvette Cedex, France\\[1ex]
\and
University of Kent, \\
Cornwallis Building, Canterbury, Kent CT2 7NF, United Kingdom
}

\maketitle

\begin{abstract}
  We describe a new \maple\ package for treating boundary problems for
  linear ordinary differential equations, allowing two-/multi-point as
  well as Stieltjes boundary conditions. For expressing differential
  operators, boundary conditions, and Green's operators, we employ the
  algebra of integro-differential operators. The operations
  implemented for regular boundary problems include computing Green's
  operators as well as composing and factoring boundary problems. Our
  symbolic approach to singular boundary problems is new; it provides
  algorithms for computing
compatibility conditions and generalized Green's operators.
\end{abstract}

\keywords{Linear boundary problem, Singular Boundary Problem, Generalized Green's operator, Green's function, Integro-Differential Operator, Ordinary Differential Equation}

\section{Introduction}
\label{sec:intro}
Although boundary problems clearly play an important role in
applications and in Scientific Computing, there is no systematic
support for solving them symbolically in current computer algebra
systems. In this paper, we describe a \maple\ package with algorithms
for regular as well as singular boundary problems for linear ordinary
differential equations (LODEs). While a first version of the package
with functions for regular boundary problems was presented
in~\cite{KorporalRegensburgerRosenkranz2010}, the methods and the
implementation for singular problems are new. A prototype
implementation for regular boundary problems in the \tma\ system was
described in \cite{RosenkranzRegensburgerTecBuchberger2009} as part of
a general symbolic framework for boundary problems, including also
some first steps towards linear partial differential equations
(LPDEs).

 In Section \ref{sec:idop},  we recall the algebra of
integro-differential operators providing the algebraic structure for
computing with boundary problems. We describe its implementation in
\maple, where we use a normal form approach in contrast to
\cite{RosenkranzRegensburgerTecBuchberger2009}.  In Section
\ref{sec:reg}, we outline our symbolic approach for solving boundary
problems. For an analytic treatment of boundary problems for LODEs,
see for example~\cite{CoddingtonLevinson1955,Stakgold1979} or
\cite{Duffy2001} for further applications. The functions we present
include the computation of Green's operators and Green's functions as
well as the factorization of boundary problems.

We introduce generalized boundary problems in Section \ref{sec:sing}
and develop an algorithm for computing generalized Green's
operators. The main step of the algorithm is to determine
compatibility conditions for arbitrary boundary problems in an
algebraic setting; the special case of two-point boundary problems of
second order is discussed in \cite[Lecture 34]{Agarwal2008}. For
singular boundary problems and generalized or modified Green's
functions in Analysis, we refer for example to~\cite{Stakgold1979}
and~\cite{Loud1970}, and in the context of generalized inverses to
\cite[Sect.~9.4]{Ben-Israel2003}, \cite{Boichuk2004},
and~\cite[Sect.~H]{Nashed1976}.

The \maple\ package \emph{IntDiffOp} is
available with an example worksheet at
\url{http://www.risc.jku.at/people/akorpora/index.html}.

\section{Integro-Differential Operators}
\label{sec:idop}

We first recall the definition of integro-differential algebras and
operators, see \cite{RosenkranzRegensburger2008a} and
\cite{RosenkranzRegensburgerTecBuchberger2011} for further
details. For the similar notion of differential Rota-Baxter algebras,
we refer to~\cite{GuoKeigher2008}. As a motivating example, consider
the algebra $\galg = C^\infty (\mathbb{R})$ with the usual derivation
and the integral operator $\cum\colon f \mapsto \int_a^x f(\xi) \,
d\xi$ for a fixed $a \in \mathbb{R}$. The essential algebraic
identities satisfied by the derivation and the integral operator are
the Leibniz rule, the Fundamental Theorem of Calculus, and Integration
by Parts. Note also that $f(a)=f-\cum f'$, so the evaluation
$\evl_a\colon f \mapsto f(a)$ at the initialization point $a$ of the
integral can also be expressed in terms of the derivation and
integral.

We call $(\galg, \partial, \cum)$ an \emph{integro-differential
  algebra} if $(\galg,\der)$ is a commutative differential algebra
over a commutative ring $K$ and $\cum$ is a $K$-linear right inverse
(section) of $\der={}'$, meaning~$(\cum f)' = f$, such that the
\emph{differential Baxter axiom}
\begin{equation*}
\label{eq:diff-baxter-axiom}
(\cum f')(\cum g') + \cum (f g)' = (\cum f') g + f (\cum g')
\end{equation*}
holds. We call $\evl = 1-\cum \circ \der$ the \emph{evaluation} of
$\galg$. We say that an integro-differential algebra over a field $K$
is ordinary if $\Ker (\der)=K$. For an ordinary integro-differential
algebra, the evaluation can be interpreted as a multiplicative linear
functional (character) $\evl \colon \galg \rightarrow K$. This allows
treating initial value problems, but for doing boundary problems we
need additional characters $\varphi \colon \galg \rightarrow K$ (in
the above example, evaluations $\evl_c\colon f \mapsto f(c)$ at various
points $c \in \mathbb{R}$).

Let $(\galg, \der, \cum)$ be an ordinary integro-differential algebra over a field $K$ and let $\Phi \subseteq \galg^*$  be a set of multiplicative linear functionals $\phi \colon \galg \rightarrow K$ including $\evl$. The \emph{integro-differential operators} $\galg_{\Phi}[\der, \cum]$ are defined in \cite{RosenkranzRegensburger2008a}
 as the $K$-algebra generated by the symbols $\der$ and $\cum$, the
 ``functions'' $f\in \galg$ and the ``functionals''
 $\phi \in \Phi$, modulo the Noetherian and confluent rewrite system of Table~\ref{fig:red-rules}.

\begin{table}[h]
  \newcommand{\ra}{\rightarrow}
  \centering
  \renewcommand{\baselinestretch}{1.3}
  \small
  \begin{tabular}[h]{|@{\quad}lcl@{\quad}|@{\quad}lcl@{\quad}|@{\quad}lcl@{\quad}|}
    \hline
    $f g$ & $\ra$ & $f \cdot g$ &
      $\der f$ & $\ra$ & $f \der + f'$ &
      $\cum f \cum$ & $\ra$ & $(\cum f) \, \cum - \cum \, (\cum
        f)$\\
    $\phi \psi$ & $\ra$ & $\psi$ &
      $\der \phi$ & $\ra$ & $0$ &
      $\cum f \der$ & $\ra$ & $f - \cum f' - \evl(f)
         \, \evl$\\
    $\phi f$ & $\ra$ & $\phi (f) \, \phi$ &
      $\der\cum$ & $\ra$ & $1$ &
      $\cum f \phi$ & $\ra$ & $(\cum f) \, \phi$\\
    \hline
  \end{tabular}
  \medskip
  \caption{Rewrite Rules for Integro-Differential Operators}
  \label{fig:red-rules}
\end{table}

The representation of integro-differential operators in our \maple\ implementation is based on the fact that every integro-differential operator has a unique normal form as a sum of a differential, integral, and boundary operator.
 The normal forms of differential operators are as usual $\sum f_i\der^i$, integral operators can be written uniquely (up to bilinearity) as sums of terms of the form $f\cum g$, and the normal forms of \emph{boundary operators} are given by
\begin{equation}
\label{eq:BOP}
\sum_{\varphi \in \Phi} \Bigl( \sum_{i \in \N} f_{i, \varphi} \varphi \partial^i +
 \sum_{j\in \N} g_{j, \varphi} \phi \cum h_{j, \varphi} \Bigr),
\end{equation}
with only finitely nonzero summands. Stieltjes \emph{boundary
  conditions} are boundary operators where $f_{i, \varphi}=a_{\phi,i}
\in K$ and $g_{j,\phi} =1$. They act on $\galg$ as linear functionals
in the dual space $\V^*$. See \cite{BrownKrall1974} for Stieltjes
boundary conditions in Analysis.

From Table~\ref{fig:red-rules} formulas can be derived for expressing
the product of integro-differential operators directly in terms of
normal forms; see \cite{RegensburgerRosenkranzMiddeke2009} for the
case $\Phi=\{\evl\}$. Implementing these formulas leads to faster
computations since we need not reduce in each step. In our package, we
use for the underlying ``integro-differential algebra'' all the smooth
functions in one variable representable in \maple, together with the
usual derivation and the integral operator $\cum = \cum_0^x$, both
computed by \maple\ . We take as characters $\Phi = \{ \evl_c \mid c
\in \R\}$.

We created data types for the different kinds of operator,
representing integro-differential operators as triples $\idop(a,b,c)$,
where $a$ is a differential operator, $b$ an integral operator and $c$
a boundary operator. Differential operators are represented as lists
$\dop(f_0, f_1, \ldots)$ and integral operators as lists of pairs of
the form $\iop(\itt(f_1, g_1), \itt(f_2, g_2), \ldots)$. In order to
have a unique representation for integral operators, one would need a
basis of the underlying integro-differential algebra and use only
basis elements for the $g_i$. In our implementation, we use the following
heuristic approach: We split sums in the $g_i$ and move scalar
factors to the coefficients $f_i$.

Due to \eqref{eq:BOP}, a boundary operator $\bop$ contains a
list of evaluations at different points. Each evaluation $\evop$ is a
triple containing the evaluation point, the local part $\sum
f_{i, \varphi} \varphi \partial^i$ and the global part $\sum g_{j, \varphi} \varphi \cum h_{j, \varphi}$. Hence we use
the expression $\bop(\evop(c, \edop(f_0, \ldots), \eiop(\eitt(g_1,
h_1), \ldots), \ldots)$ for the representation of boundary
operators.

In the following example, we first enter some operators of
different types. For displaying the operators, we use $\D{}$ for $\partial$,
$\A$ for $\cum$ and $\Ev{c}$ for the evaluation $\evl_c$.

\newcommand{\negskip}{~\vspace*{-3ex}}
\newcommand{\negsskip}{~\vspace*{-2ex}}
\begin{center}
\begin{minipage}{0.999\textwidth}
\begin{framed}~\\[-1cm]\small
\begin{alltt}
> T := DIFFOP(0,0,1);\negsskip
        \[ T := \D{2} \] \negskip
> G := INTOP(INTTERM(1,1));\negsskip
        \[ G := \A \] \negskip
> B := BOUNDOP(EVOP(1, EVDIFFOP(1), EVINTOP(EVINTTERM(1,1))));\negsskip
\[  B := \Ev{1}  + ((\Ev{1}) . A) \] \negskip \negskip
\end{alltt}\normalsize
\end{framed}
\end{minipage}
\end{center}

Now we show how to add and multiply integro-differential operators and how to apply them to a function $f \in \V$.

\begin{center}
\begin{minipage}{0.999\textwidth}
\begin{framed}~\\[-1cm]\small
\begin{alltt}
> ApplyOperator(G, f(x));
\end{alltt}
\begin{equation*}
         \cum_0^x f(x) dx
\end{equation*}
\begin{alltt}
> MultiplyOperator(G,G);\negsskip
   \[(x . A) - (A . x)\] \negskip
> MultiplyOperator(T,G,G);\negsskip
 \[ 1\] \negskip
> S := AddOperator(T, G, B);
 \[ S := \D{2} + A + \Ev{1}+ ((\Ev{1}) . A) \] \negskip
> ApplyOperator(S, f(x));
\end{alltt}
\begin{equation*}
  \frac{\operatorname{d}^2}{\operatorname{dx}^2} f(x) +  \cum_0^x f(x) dx + f(1) +  \cum_0^1 f(x) dx
\end{equation*}
~\\[-0.9cm]
\end{framed}\normalsize
\end{minipage}
\end{center}

\section{Regular Boundary Problems in \maple}
\label{sec:reg}
In this section, we demonstrate how to compute with regular boundary problems
in our \maple\ package.
For an integro-differential algebra $\V$, a boundary problem is given by a monic differential operator $T = \partial ^n + c_{n-1} \partial^{n-1} + \dots + c_1 \partial + c_0$ and boundary conditions $\beta_1,\ldots , \beta_m$.
Given a forcing function $f \in \V$, we want to find $u \in \V$ such that
\begin{equation}\label{BP:General}
   \bvp{Tu=f,}{\beta_1 u = \cdots = \beta_n u =0.}
\end{equation}
A boundary problem is called regular if for each $f \in \V$ there is
exactly one $u \in \V$ satisfying \eqref{BP:General}. We want to solve
a boundary problem not only for a fixed~$f$ but to compute the Green's
operator mapping each forcing function~$f$ to its unique
solution~$u$. In other words, we solve a whole family of
inhomogeneous differential equations, parameterized by a ``symbolic''
right-hand side~$f$. We restrict ourselves
to homogeneous conditions because the general solution is then obtained by
adding a particular solution satisfying the inhomogeneous conditions.

For convenience, we shortly recall the abstract linear algebra setting
for boundary problems over a vector space $\V$ as described in
\cite{RegensburgerRosenkranz2009}. For $U \leq \V$ we define the \emph{orthogonal} as $U^{\perp} = \{ \beta \in \V^* : \beta(u) = 0 \text{ for all } u \in U\} \leq \V^*$. Similarly, for $\B \leq \V^*$, we define $\Bo = \{v \in \V : \beta(v) = 0 \text{ for all } \beta \in \B\}\leq \V$. A subspace $U$ (resp. $\B$) is \emph{orthogonally closed} if $U = U^{\perp\perp}$ (resp. $\B = \B^{\perp\perp}$). Every subspace $U \leq \V$ is orthogonally closed and every finite dimensional subspace $\B \leq \V^*$ is orthogonally closed. For a linear map $T \colon \V \to \W$ between vector spaces, the transpose map $T^* \colon  \W^* \to \V^*$ is defined by $\gamma \mapsto \gamma \circ T$. The image of an orthogonally closed space under the transpose map is orthogonally closed.

A \emph{boundary problem} is given by a pair $(T,\B)$, where $T$ is a surjective
linear map and $\B \le \V^*$ is an orthogonally closed subspace of the
dual space.
We call $u \in \V$ a solution of
$(T,\B)$ for a given $\f \in \V$ if $Tu=\f$ and $u\in \Bo$. A boundary
problem is \emph{regular} if for each $f$ there exists a unique
solution $u$. The \emph{Green's operator} of a regular problem maps
each $f$ to its unique solution $u$. We also write
$(T, \B)^{-1}$ for the Green's operator. A boundary problem is
\emph{regular} iff $\B^{\perp}$ is a complement of $\Ker T$ so that
$\V=\Ker T \dirs \B^{\perp}$ as a direct sum.

For $\V=C^\infty[a,b]$, a monic differential operator $T$ is always
surjective and $\dim \Ker T =n < \infty$.  Moreover, variation of
constants can be used to compute a distinguished right inverse: If~$T$
has order~$n$ and $u_1, \ldots, u_n$ is a fundamental system for it,
the \emph{fundamental right inverse} is given by
\begin{equation}\label{Eq:FRI}
   \fri{T} = \sum_{i=1}^n u_i \cum d^{-1} d_i,
\end{equation}
where $d$ is the determinant of the Wronskian matrix $W$ for $(u_1,
\ldots, u_n)$ and $d_i$ the determinant of the matrix $W_i$ obtained
from $W$ by replacing the $i$-th column by the $n$-th unit vector.
Equation \eqref{Eq:FRI} is valid in arbitrary integro-differential
algebras provided the $n$-th order operator $T$ has a fundamental
system $(u_1, \ldots, u_n)$ with invertible Wronskian matrix; see
\cite{RosenkranzRegensburger2008a} or
\cite{RosenkranzRegensburgerTecBuchberger2011}. This will be
assumed from now on, together with the condition $\dim\B < \infty$
appropriate for LODEs.

Regularity of a boundary problem $(T, \B)$ can be tested
algorithmically as follows.
If $(u_1, \ldots, u_n)$ is a basis
for~$\Ker{T}$ and $(\beta_1, \ldots, \beta_m)$ for~$\B$, we have a
regular problem iff the \emph{evaluation matrix}
\begin{equation}
  \label{eq:EvalMat}
  \beta(u)=
  \begin{pmatrix}
    \beta_1(u_1) & \dots & \beta_1(u_n) \\
    \vdots & \ddots & \vdots \\
    \beta_m(u_1) & \dots &\beta_m(u_n)
  \end{pmatrix}
\end{equation}
is regular; see \cite[Cor. A.17]{RegensburgerRosenkranz2009} or
\cite[p.~184]{Kamke1967} for the special case of two-point boundary
conditions. Of course this implies $m=n$, but we will consider more
general types of boundary problems in Section~\ref{sec:sing} where
this is no longer the case. It will also be convenient to use the
notation~\eqref{eq:EvalMat} for arbitrary~$u_1, \ldots, u_n \in \V$
and boundary conditions~$\beta_1, \ldots, \beta_m$.

The algorithm for computing the Green's operator is described in
detail in \cite{RosenkranzRegensburger2008a}; see also
\cite{RosenkranzRegensburgerTecBuchberger2009}.  The main steps
consist in computing the fundamental right inverse $\fri{T} \in
\intdiffop$ from a given fundamental system as in \eqref{Eq:FRI} and
the projector $P \in \intdiffop$ onto $\Ker T$ along $\Bo$. Then the
Green's operator is then computed as $G=(\id -P)\fri{T}$.

For a boundary problem we need to enter a monic differential operator $T$ and
a list of boundary conditions $(b1, \ldots, bm)$ as
described in Section \ref{sec:idop} in the form $\bp(T, \bc(b1,
\ldots, bm))$. We use the
\maple\ function \emph{dsolve} for computing a fundamental system of $T$.  As an example, we compute
the Green's operator for the simplest two-point boundary problem $u''
= f$, $u(0)=u(1)=0$. From the Green's operator for two-point boundary problems, we can extract the Green's function~\cite{Rosenkranz2005}, which is usually used in Analysis to represent the Green's operator.

\begin{center}
\begin{minipage}{0.95\textwidth}
\begin{framed}~\\[-1cm]\small
\begin{alltt}
> T := DIFFOP(0,0,1):
> b1 := BOUNDOP(EVOP(0, EVDIFFOP(1), EVINTOP())):
> b2 := BOUNDOP(EVOP(1, EVDIFFOP(1), EVINTOP())):
> Bp := BP(T, BC(b1, b2));
\[ \mathit{Bp} := \operatorname{BP}(\D2, \operatorname{BC}(\Ev0, \Ev1)) \] \negskip
> IsRegular(Bp);      \negskip
\[ true \] \negskip
> GreensOperator(Bp); \negskip
\[ (x . \A) - (\A . x) - ((x \Ev1) . \A) + ((x \Ev1) . \A . x)  \] \negskip
> GreensFunction(%);
\end{alltt}
\begin{equation*}
   \begin{cases}
      -\xi + x \xi    &    0 <= \xi \text{ and } \xi <= x \text{ and } x <= 1 \\
      -x + x \xi     &    0 <= x \text{ and } x <= \xi \text{ and } \xi <= 1
   \end{cases}
\end{equation*}
~\\[-0.8cm]
\end{framed}\normalsize
\end{minipage}
\end{center}

For simplifying boundary problems, we can apply factorizations into
lower order problems along given factorizations of the differential
operators.  Further details and proofs of the following results can be
found in \cite{RegensburgerRosenkranz2009} and
\cite{RosenkranzRegensburger2008a}. The composition of two boundary
problems $(T_1, \B_1)$ and $(T_2, \B_2)$ is defined as
\begin{equation} \label{Eq:BComp}
(T_1, \B_1) \circ (T_2, \B_2) = (T_1T_2, T_2^*(\B_1) + \B_2).
\end{equation}
The composition $(T_1, \B_1) \circ (T_2, \B_2)$ of two regular boundary problems is regular with Green's operator
\begin{equation} \label{Eq:GComp}
((T_1, \B_1) \circ (T_2, \B_2))^{-1} = (T_2, \B_2)^{-1} (T_1, \B_1)^{-1}.
\end{equation}
Given a regular boundary problem $(T, \B)$, every factorization
$T=T_1T_2$ can be lifted to a factorization $(T, \B) = (T_1, \B_1)
\circ (T_2, \B_2)$, where $(T_1, \B_1)$ and $(T_2, \B_2)$ are regular
and $\B_2 \leq \B$. For factorizing a differential operator, we use the
function \emph{DFactor} in the \maple\ package \emph{DEtools}. As an
easy example, we show how to factor the boundary problem from above;
more examples for solving and factoring boundary problems can be found in our example worksheet.

\begin{center}
\begin{minipage}{0.95\textwidth}
\begin{framed}~\\[-1cm]\small
\begin{alltt}
> Bp := BP(T, BC(b1, b2));
  \[ \mathit{Bp} := \BP(\D{2}, \BC(\Ev{0}, \Ev{1})) \] \negsskip
> f1, f2 := FactorBoundaryProblem(Bp);
   \[ f1,\; f2 := \BP(\D{}, \BC(\Ev1 . \A)),\; \BP(\D{}, \BC(\Ev0))  \]
\end{alltt}\normalsize
~\\[-1.7cm]
\end{framed}
\end{minipage}
\end{center}

\section{Singular Boundary Problems}
\label{sec:sing}

For illustrating the main issues with singular boundary problems, we
consider the boundary problem
\begin{equation} \label{BP:ExSing}
   \bvp{u'' = f,}{u'(0)=u'(1)=0;}
\end{equation}
see for example \cite[Page 215]{Stakgold1979} or \cite[Section
3.5]{Rosenkranz2005} from a Symbolic Computation perspective.  This
problem is singular since it is not solvable for all $f \in \V$. It
can easily be seen that if $u'' = f$, then $f$ has to fulfill the
\emph{compatibility condition} $u'(1) = \int_0^1 f(\xi) \,
d\xi=0$. Moreover, uniqueness fails as well: If a solution $u \in \V$
exists, then also $u +c$ solves the problem for all $c \in \R$.

Our goal here is to generalize the symbolic
approach of the previous section to problems of the
kind~\eqref{BP:ExSing}. Since we want to compute generalized Green's
operators, we cannot give up uniqueness of solutions---but we no
longer require existence. Of course, uniqueness of solutions can
always be achieved by imposing additional boundary conditions. On the
other hand, adding too many conditions introduces new compatibility
conditions, which we want to avoid (see after
Lemma~\ref{EvMatFullColumnRank} for the precise statement).  For the
boundary problem~\eqref{BP:ExSing}, we can add for example the
condition $u(1)=0$ and consider the problem
\begin{equation} \label{BP:ExSingPlus}
   \bvp{u'' = f,}{u'(0)=u'(1)=u(1)=0.}
\end{equation}
This does not introduce any new compatibility conditions as we will
see later (see before Lemma~\ref{Embedding}).

A boundary problem has at most one solution for each forcing function
$f$ iff $\Bo \cap \Ker T = \{0\}$. We see that for \eqref{BP:ExSing}
we have $\Bo \cap \Ker T = \R$ while in \eqref{BP:ExSingPlus} the
intersection is $\{0\}$. The regularity test for boundary problems in
terms of the evaluation matrix~\eqref{eq:EvalMat} can be generalized
from the setting in Section~\ref{sec:reg}.

\begin{lemma} \label{EvMatFullColumnRank} Let $U = [u_1, \ldots , u_n]
  \leq \V$ and $\B =[\beta_1, \ldots, \beta_m] \leq \V^*$ with
  $\beta_i$ and $u_j$ linearly independent. Then $U \cap \B ^{\perp} =
  \{0\}$ iff the evaluation matrix $\beta(u)$
   has full column rank.
\end{lemma}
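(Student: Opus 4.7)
The plan is to reduce the geometric condition $U\cap\Bo=\{0\}$ to a purely linear-algebraic statement about the kernel of $\beta(u)$, using the bases on each side.

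\medskip

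First I would parametrize $U$ by coefficient vectors: since $u_1,\ldots,u_n$ are linearly independent, the map $K^n\to U$ sending $c=(c_1,\ldots,c_n)^{\!\top}\mapsto u=\sum_{j=1}^n c_j u_j$ is a linear isomorphism. Next I would translate the condition $u\in\Bo$: by definition this means $\beta_i(u)=0$ for every generator $\beta_i$ of $\B$, and by $K$-linearity of each $\beta_i$ this is equivalent to $\sum_{j=1}^n \beta_i(u_j)\,c_j=0$ for $i=1,\ldots,m$, i.e.\ $\beta(u)\,c=0$. Hence the correspondence $c\leftrightarrow u$ restricts to a linear isomorphism between $\Ker\beta(u)\subseteq K^n$ and $U\cap\Bo\subseteq\V$.

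From this isomorphism the equivalence is immediate: $U\cap\Bo=\{0\}$ holds iff $\Ker\beta(u)=\{0\}$, and the latter is exactly the statement that the $n$ columns of $\beta(u)$ are linearly independent, i.e.\ that $\beta(u)$ has full column rank. The independence of the $\beta_i$ is not needed for either direction here; it merely ensures that no row of $\beta(u)$ is a redundant linear combination of the others, which is the natural symmetric hypothesis for the lemma.

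There is no real obstacle: the only point to be careful about is not to confuse full column rank (rank $=n$) with regularity of the matrix (which would additionally require $m=n$). The case $m\ne n$ is exactly what is needed in the singular setting of Section~\ref{sec:sing}, so the proof should be written in the general rectangular form without tacitly appealing to invertibility.
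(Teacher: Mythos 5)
Your proof is correct and follows essentially the same route as the paper: both arguments identify $\Ker\beta(u)$ with $U\cap\B^{\perp}$ via the coordinate map $c\mapsto\sum_j c_j u_j$ (the paper phrases it contrapositively, via a nontrivial vanishing combination of columns corresponding to a nonzero element of the intersection). Your side remark that only the linear independence of the $u_j$, not of the $\beta_i$, is actually used is also accurate.
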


\begin{proof}
  Let $b_j$ denote the columns of $\beta(u)$. The evaluation matrix
  has deficient column rank iff there exists a linear combination
  $\sum_{j=1}^n \lambda_j b_j =0$ with at least one $\lambda_j \not=
  0$. This is the case iff there exist a nonzero $u = \sum_{j=1}^n
  \lambda_j u_j \in U \cap \B_1^{\perp}$. \qed
\end{proof}

As mentioned for the example~\eqref{BP:ExSingPlus}, singular boundary
problems typically impose \emph{compatibility conditions} on the
admissible forcing functions. We can now make this precise: Clearly, a
function~$f$ is admissible iff it is of the form $Tu$ for a function
$u$ that satisfies the boundary conditions from $\B$, so the space of
admissible functions is $T(\Bo)$. The compatibility conditions provide
an implicit description of this space, comprising all those linear
functionals that annihilate $T(\Bo)$. In other words, the
compatibility conditions are the subspace $T(\Bo)^{\perp}$ of
$\V^*$. This also makes precise what we mean by adding boundary
conditions without imposing additional compatibility conditions: We
enlarge~$\B$ to~$\tilde\B$ so as to ensure $\tilde\B^{\perp} \cap
\Ker{T} = \{ 0 \}$ despite retaining $T(\Bo) = T(\tilde\B^{\perp})$.

For tackling the problem of existence, we modify the forcing
function. In the example~\eqref{BP:ExSingPlus}, this looks as follows:
Since a solution exists only for forcing functions that fulfill
$\int_0^1 f(\xi) \, d\xi=0$, we consider the problem
\begin{equation} \label{BP:ExSingPlusQ}
   \bvp{u'' = f- \int_0^1 f(\xi) \, d\xi, }{u'(0)=u'(1)=u(1)=0,}
\end{equation}
which now always has a unique solution. For those $f$ that fulfill the
compatibility condition, problem \eqref{BP:ExSingPlus} remains
unchanged.

The general idea is that we project an arbitrary forcing function into
the space of admissible functions. But this involves choosing those
``exceptional functions'' that we want to filter out. Even in the
simple example~\eqref{BP:ExSingPlus}, we might as well project $f$ to
$f - \frac{1}{2}x \cum_0^1 f(\xi) \, d\xi$ instead of $f - \cum_0^1 f(\xi) \,
d\xi$. In the second case, we have filtered out the constant
functions, in the first case the linear-homogeneous ones. The
space~$\C$ of exceptional functions can be any complement of the
space~$T(\Bo)$ of admissible functions, like $\C = [1]$ or $\C = [x]$
in this example.

\begin{definition}
  \label{Def:GenBndProb}
 A \emph{generalized boundary problem} is given by a triple $(T,\B,
 \C)$, where $(T, \B)$ is a boundary problem and $\C \leq \V$. A
 generalized boundary problem is called \emph{regular} if
\begin{equation*}
\Bo \cap \Ker T = \{0\} \quad \text{and} \quad \V = T(\Bo) \dotplus \C.
\end{equation*}
The \emph{generalized Green's operator} maps each forcing function $f$  to the unique solution of the boundary problem
\begin{equation*}
   \bvp{Tu = Qf,}{\be_1u=\ldots = \be_mu=0,}
\end{equation*}
where $\B = [\be_1, \ldots, \be_m]$ and $Q$ is the projector onto
$T(\Bo)$ along $\C$. We also write $(T, \B, \C)^{-1}$ for the Green's operator.
\end{definition}

If $(T, \B, \C)$ is regular,  the restriction $T|_{\Bo} \colon \Bo \to T(\Bo)$ is bijective.  So the generalized Green's
operator is given by
\begin{equation}
\label{eq:generalizedG}
G = T|_{\Bo}^{-1} Q.
\end{equation}
We begin with computing the projector $Q$. For this we derive first an
explicit description of the space of compatibility conditions.

\begin{proposition}\label{Prop:CompCond}
   Let $(T, \B, \C)$ be a generalized boundary problem and let $G$ be any right inverse of $T$. Then we have
   \begin{equation}
     \label{eq:CompCond}
      T(\Bo)^{\perp} = G^*(\B \cap (\Ker T)^{\perp}).
   \end{equation}
   Moreover, $\dim T(\Bo)^{\perp} = \dim \C$ for any complement $\C$ with $\V = T(\Bo) \dotplus \C$.
\end{proposition}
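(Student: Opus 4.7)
The plan is to prove the first identity $T(\Bo)^\perp = G^*(\B \cap (\Ker T)^\perp)$ by two inclusions, exploiting the dual relation $G^* T^* = \id$ on $\V^*$ that follows from $TG = \id_\V$; the dimension equality will then drop out as a corollary.

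For the inclusion $\supseteq$, I take $\gamma \in \B \cap (\Ker T)^\perp$ and $u \in \Bo$, and compute $G^*(\gamma)(Tu) = \gamma(GTu)$. Writing $GTu = u + (GTu - u)$ gives a decomposition in which the first summand is annihilated by $\gamma \in \B$ (since $u \in \Bo$) and the second lies in $\Ker T$ (applying $T$ gives $Tu - Tu = 0$) and is therefore annihilated by $\gamma \in (\Ker T)^\perp$.

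For the opposite inclusion, given $\eta \in T(\Bo)^\perp$, I will propose $\gamma := T^*\eta = \eta \circ T$ as a preimage. Membership $\gamma \in (\Ker T)^\perp$ is immediate, and $G^*(\gamma) = \eta \circ TG = \eta$ is just the dualized identity $TG = \id$. The delicate step is showing $\gamma \in \B$: for every $u \in \Bo$ one has $\gamma(u) = \eta(Tu) = 0$ because $Tu \in T(\Bo)$, so $\gamma \in (\Bo)^\perp$, and this coincides with $\B$ \emph{only} because $\B$ is orthogonally closed by the definition of a boundary problem. This is the main subtle point of the proof; without the orthogonal closure hypothesis, one could merely conclude that the preimage lies in $(\Bo)^\perp$, which may strictly contain $\B$.

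For the dimension assertion, I will upgrade the first identity to an isomorphism by checking that $G^*$ is injective on $\B \cap (\Ker T)^\perp$. Indeed, if $\gamma \circ G = 0$ then $\gamma$ vanishes on $\Img G$, and combined with $\gamma \in (\Ker T)^\perp$ it vanishes on the whole of $\V = \Ker T \dirs \Img G$, this splitting being produced by the idempotent $GT$ (which satisfies $(GT)^2 = G(TG)T = GT$). Hence $\dim T(\Bo)^\perp = \dim (\B \cap (\Ker T)^\perp) \le \dim \B < \infty$. To connect this with $\dim \C$, note that $\V = T(\Bo) \dirs \C$ makes restriction to $\C$ an isomorphism $T(\Bo)^\perp \to \C^*$; the finite-dimensionality of the left-hand side then forces $\C$ to be finite-dimensional, yielding $\dim \C = \dim \C^* = \dim T(\Bo)^\perp$ as required.
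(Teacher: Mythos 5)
Your proof is correct, but it takes a genuinely different route from the paper's. The paper derives the identity \eqref{eq:CompCond} by chaining three abstract duality facts quoted from \cite{RegensburgerRosenkranz2009}: $T(\Bo)^{\perp}=(T^*)^{-1}(\B)$, then $(T^*)^{-1}(\B)=G^*(\B\cap\Img T^*)$ because $G^*$ is a left inverse of the injective map $T^*$, and finally $\Img T^*=(\Ker T)^{\perp}$; for the dimension count it invokes the orthogonal closedness of $T(\Bo)$ to write $\dim T(\Bo)^{\perp}=\codim T(\Bo)^{\perp\perp}=\codim T(\Bo)$ and then cites one more lemma. You instead give a self-contained two-inclusion element chase: the decomposition $GTu=u+(GTu-u)$ with $GTu-u\in\Ker T$ handles one direction, and the explicit preimage $\gamma=T^*\eta$ handles the other. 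Your argument has the virtue of isolating exactly where the hypothesis that $\B$ is orthogonally closed enters (namely to pass from $\gamma\in(\Bo)^{\perp}=\B^{\perp\perp}$ back to $\gamma\in\B$) --- this is indeed the one place that hypothesis from the definition of a boundary problem is indispensable, and it is hidden inside the cited Prop.~A.6 in the paper's version. Your dimension argument is also different and arguably cleaner: instead of using that $T(\Bo)$ is orthogonally closed (a nontrivial fact the paper cites separately), you show $G^*$ is injective on $(\Ker T)^{\perp}$ via the splitting $\V=\Ker T\dotplus\Img G$ induced by the idempotent $GT$, and then identify $T(\Bo)^{\perp}\cong\C^*$ by restriction; together with the standing assumption $\dim\B<\infty$ this forces $\dim\C=\dim T(\Bo)^{\perp}$. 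What the paper's route buys is brevity and placement of the result inside its general duality framework; what yours buys is independence from the cited lemmas and full transparency about which hypotheses are used where.
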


\begin{proof}
  With \cite[Prop. A.6]{RegensburgerRosenkranz2009}, we see that
  $T(\Bo)^{\perp} = (T^*)^{-1}(\B)$. Since $T$ is surjective, $T^*$ is
  injective, and for any right inverse $G$ of $T$, $G^*$ is a left
  inverse of $T^*$. Hence $(T^*)^{-1}(\B) = G^*(\B \cap \Img T^*)$ by
  \cite[Prop. A.13]{RegensburgerRosenkranz2009}. Again by
  \cite[Prop. A.6]{RegensburgerRosenkranz2009}, we have $\Img
  T^* = (\Ker T)^{\perp}$, and hence
    \begin{math}
      T(\Bo)^{\perp} = G^*(\B \cap (\Ker T)^{\perp}). 
   \end{math}

   Since $\dim \B < \infty$, by the first statement also $\dim
   T(\Bo)^{\perp} < \infty$. But $T(\Bo)$ is orthogonally closed; see
   for example \cite[Section
   A.1]{RegensburgerRosenkranz2009}. Therefore we obtain
   \begin{equation*}
      \dim T(\Bo)^{\perp} = \codim T(\Bo)^{\perp\perp} = \codim T(\Bo),
   \end{equation*}
   and the statement follows immediately from \cite[Prop. A.14]{RegensburgerRosenkranz2009}. \qed
\end{proof}

Note that $s = \dim{\C} = \codim{T(\Bo)}$ counts the number of
(linearly independent) compatibility
conditions. Equation~\eqref{eq:CompCond} is the key for an algorithmic
description of the projector~$Q$ onto~$T(\Bo)$ along~$\C$. The
space~$\C$ is given as part of the problem description, and it can be
specified by a basis $(w_1, \ldots, w_s)$. Since the other space
$T(\Bo)$ has finite codimension $s$, it can be specified in terms
of~$s$ linearly independent compatibility conditions, and
Equation~\eqref{eq:CompCond} can be used to compute these in terms
of~$T$ and~$\B$. For that we just have to determine a basis of~$\B
\cap (\Ker T)^{\perp}$ and then apply any right inverse~$G$ of~$T$,
for example the fundamental right inverse~$\fri{T}$ defined in
Section~\ref{sec:reg}.

For determining a basis of $\B \cap (\Ker T)^{\perp}$ we first compute
the kernel of the transpose of the evaluation matrix $\beta(u)$, where
$(u_1, \ldots , u_n)$ is any basis of $\Ker T$ and $(\beta_1, \ldots,
\beta_m)$ any basis of $\B$. If $w=(w_1, \ldots, w_m)^t \in \Ker
\beta(u)^t$, then
\begin{equation*}
   w^t(\beta_1, \ldots, \beta_m)^t = \sum_{i=1}^m w_i \beta_i \in \B \cap (\Ker T)^{\perp},
\end{equation*}
hence a basis of $\B \cap (\Ker T)^{\perp}$ can be obtained by
computing the products $(v_1^t (\beta_1, \ldots,  \beta_m)^t, \ldots ,
v_k^t (\beta_1, \ldots, \beta_m)^t)$, where $(v_1, \ldots, v_k)$ is a
basis of $\Ker \beta(u)^t$.

Using Proposition \ref{Prop:CompCond}, we can now verify that the
compatibility conditions of the boundary problems \eqref{BP:ExSing}
and \eqref{BP:ExSingPlus} are the same. In both cases we have
$T=\partial^2$, so we can choose the fundamental right inverse $\cum
\cum = x\cum - \cum x$ and $(1,x)$ as a basis of $\Ker T$.
The evaluation matrices are given by
\begin{equation*}
   \beta(u)
   = \begin{pmatrix}
        0 & 1 \\0 & 1
     \end{pmatrix}
   \quad
   \text{and}
   \quad
   \beta(u)
    =
		\begin{pmatrix}
                  0 & 1 \\0 & 1 \\1 & 1
              	\end{pmatrix}.
\end{equation*}
In the first case, a basis of $\beta(u)^t$ is given by $((-1, 1)^t)$, hence
$(\E_1\partial - \E_0\partial)$ is a basis of $\B \cap (\Ker
T)^{\perp}$. In the second case, a basis of $\beta(u)^t$ is given by $((-1,1,0)^t)$ and the basis
of $\B \cap (\Ker T)^{\perp}$ is again $(\E_1\partial -
\E_0\partial)$. Multiplying this basis by the right inverse of~$T$, we
get as a basis for the compatibility conditions
\begin{align*}
  & (\E_1\partial - \E_0\partial) \cdot (x\cum - \cum x)
= \E_1 (x\partial +1)\cum - \E_0 (x\partial +1)\cum - \E_1\partial
\cum x + \E_0\partial \cum x\\
& \qquad = \E_1 x + \E_1 \cum - \E_0 x - \E_0 \cum -\E_1x + \E_0 x =
\E_1 \cum = \cum_0^1,
\end{align*}
which agrees with our heuristic considerations after~\eqref{BP:ExSing}.

We can now compute the
projector $Q$ just as the kernel projector~$P$ for standard boundary
problems (mentioned in Section~\ref{sec:reg}).  If $(\kappa_1, \ldots,
\kappa_s)$ is a basis for the compatibility
conditions~$T(\Bo)^{\perp}$ and $(w_1, \ldots, w_s)$ a basis for~$\C$,
then the corresponding evaluation matrix $\kappa(w)$ is regular by
Lemma \ref{EvMatFullColumnRank}, which can be applied to
$\V = T(\Bo) \dotplus \C = T(\Bo)^{\perp \perp} \dotplus \C$
since $T(\Bo)$ is orthogonally closed. Hence we can compute the
projector $Q$ onto $T(\Bo)$ along $\C$ as
\begin{displaymath}
 Q = 1- \sum_{i=1}^s w_i \tilde{\kappa}_i,
\end{displaymath}
where $(\tilde{\kappa}_1, \ldots, \tilde{\kappa}_s)^t = \kappa(w)^{-1}
\cdot (\kappa_1, \ldots, \kappa_s)^t$; see for example \cite[Lemma
A.1]{RegensburgerRosenkranz2009}.

The final step for computing the generalized Green's operator~\eqref{eq:generalizedG} is to
find the inverse function $T|_{\Bo}^{-1}$. In the regular case, we
started with an arbitrary right inverse of $T$ and multiplied with a
projection onto $\Bo$ along $\Ker T$. But this step cannot be
generalized to our setting. Our approach is to embed the generalized
problem into a standard one in the following sense.

First note that the evaluation matrix of a regular generalized boundary problem has full column rank by  Lemma \ref{EvMatFullColumnRank}, so it has a left inverse.

\begin{lemma}
  \label{Embedding}
  Let $(T, \B, \C)$ be a regular generalized boundary problem.
  Let $\beta(u)^-$ be a left inverse of $\beta(u)$ and
  $(\tilde{\beta}_1, \ldots ,
  \tilde{\beta}_n)^t = \beta(u)^-(\beta_1, \ldots, \beta_m)^t$.
  Then the boundary problem $(T, \tilde{\B})$ is regular, where
  $\tilde{\B} \leq \B$ is spanned by $\tilde{\beta}_1, \ldots ,
  \tilde{\beta}_n$.
\end{lemma}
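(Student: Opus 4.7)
The plan is to reduce this to the regularity criterion for standard boundary problems from Section~\ref{sec:reg}: a problem $(T, \tilde\B)$ with $\dim \tilde\B = \dim \Ker T = n$ is regular iff its evaluation matrix with respect to a basis of $\Ker T$ is invertible. So I would fix the basis $(u_1, \ldots, u_n)$ of $\Ker T$ already at hand, and simply compute the evaluation matrix $\tilde\beta(u)$ of the new functionals $\tilde\beta_1, \ldots, \tilde\beta_n$ against this basis.

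The entries of $\tilde\beta(u)$ can be read off directly from the definition $(\tilde\beta_1, \ldots, \tilde\beta_n)^t = \beta(u)^-(\beta_1, \ldots, \beta_m)^t$. By $K$-linearity of each $\beta_k$,
\begin{equation*}
  \tilde\beta_i(u_j) = \sum_{k=1}^m (\beta(u)^-)_{ik}\, \beta_k(u_j) = \bigl(\beta(u)^- \cdot \beta(u)\bigr)_{ij} = \delta_{ij},
\end{equation*}
so $\tilde\beta(u) = I_n$. Note that the left inverse $\beta(u)^-$ exists because Lemma~\ref{EvMatFullColumnRank}, applied using the hypothesis $\Bo \cap \Ker T = \{0\}$ from regularity of the generalized problem, guarantees that $\beta(u)$ has full column rank $n$.

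From $\tilde\beta(u) = I_n$ two things follow at once: the functionals $\tilde\beta_1, \ldots, \tilde\beta_n$ are linearly independent (any vanishing combination $\sum c_i \tilde\beta_i = 0$ applied to $u_j$ forces $c_j = 0$), so they form a basis of $\tilde\B$ and $\dim \tilde\B = n$; and the evaluation matrix is the invertible identity. Moreover $\tilde\B \leq \B$ is immediate, as each $\tilde\beta_i$ is by construction a $K$-linear combination of the basis $\beta_1, \ldots, \beta_m$ of $\B$. Invoking the evaluation-matrix criterion therefore yields regularity of $(T, \tilde\B)$.

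I do not expect any real obstacle here; the argument is essentially a one-line matrix identity followed by the standard regularity test. The only point requiring a bit of care is justifying the existence of the left inverse $\beta(u)^-$ in the hypothesis, which is precisely where the assumption of regularity of the generalized problem (via Lemma~\ref{EvMatFullColumnRank}) enters.
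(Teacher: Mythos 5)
Your proof is correct and follows the same route as the paper: both compute $\tilde\beta(u) = \beta(u)^-\beta(u) = 1_n$ and invoke the evaluation-matrix regularity criterion, with the existence of the left inverse secured by Lemma~\ref{EvMatFullColumnRank}. You merely spell out the entrywise computation and the linear independence of the $\tilde\beta_i$, which the paper leaves implicit as ``obvious.''
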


The proof of the statement is obvious, since
  the evaluation matrix $\tilde{\beta}(u)$ is given
  by $\beta(u)^- \beta(u) = 1_n$. Hence the problem $(T, \tilde{\B})$ is
  regular.
In our package, we always choose the Moore-Penrose pseudoinverse as a
left inverse~$\beta(u)^-$ of the evaluation matrix~$\beta(u)$.
The generalized boundary problem~\eqref{BP:ExSingPlus} for example embeds into the standard boundary problem
\begin{equation} \label{Eq:Embed}
   \bvp{u'' = f,}{u'(0)+u'(1)-2\,u(1)= u'(0)+ u'(1)=0.}
\end{equation}

The Green's operator for this regular problem according to Section \ref{sec:reg} is given by
$x\cum - \cum x - \frac{1}{2} (x+1) + \cum_0^1 	x$. The next proposition tells us how to compute the generalized Green' s operator from it.
\begin{proposition}
  Let $(T, \B, \C)$ be a regular generalized boundary problem and let
  $(T, \tilde{\B})$ be a regular boundary problem with $ \tilde{\B} \leq
  \B$. Then $$(T, \B, \C)^{-1} = (T, \tilde\B)^{-1} \,
  Q,$$ where $Q$ is the projector onto $T(\Bo)$ along $\C$.
\end{proposition}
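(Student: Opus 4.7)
The plan is to unpack both sides using the defining formula \eqref{eq:generalizedG} for the generalized Green's operator and reduce the identity to a uniqueness statement about solutions living in the smaller kernel $\Bo \subseteq \tilde{\B}^\perp$. Concretely, by~\eqref{eq:generalizedG} we have $(T, \B, \C)^{-1} = T|_{\Bo}^{-1}\, Q$, so it suffices to prove that for every $g \in T(\Bo)$ one has $T|_{\Bo}^{-1}(g) = (T, \tilde{\B})^{-1}(g)$; then applying this to $g = Qf$ for arbitrary $f \in \V$ (which is legitimate since $\Img Q = T(\Bo)$) yields the claim.

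The key observation for this reduction is the order-reversing behaviour of the orthogonal: from $\tilde{\B} \leq \B$ it follows at once that $\Bo \leq \tilde{\B}^\perp$. Now fix $g \in T(\Bo)$ and let $u = T|_{\Bo}^{-1}(g) \in \Bo$, so $Tu = g$. By the inclusion just noted, $u$ lies in $\tilde{\B}^\perp$ as well, hence $u$ is a solution of the regular boundary problem $(T, \tilde{\B})$ for the forcing function $g$. Since that problem is regular, its solution in $\tilde{\B}^\perp$ is unique, forcing $u = (T, \tilde{\B})^{-1}(g)$.

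Combining, we obtain for every $f \in \V$
\begin{equation*}
(T, \B, \C)^{-1} f \;=\; T|_{\Bo}^{-1}(Qf) \;=\; (T, \tilde{\B})^{-1}(Qf) \;=\; \bigl((T, \tilde{\B})^{-1}\, Q\bigr) f,
\end{equation*}
which is the desired equality of operators.

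There is no real obstacle here beyond bookkeeping: the identity is essentially a restatement of the fact that enlarging $\tilde{\B}$ to $\B$ can only shrink the candidate solution space, and since the larger problem $(T, \B, \C)$ already pins down a unique solution in the smaller space $\Bo$, this solution is automatically \emph{the} solution produced by the regular embedding $(T, \tilde{\B})$. The only point worth double-checking is that $Q$ actually lands in $T(\Bo)$, which is immediate from the definition of $Q$ as the projector onto $T(\Bo)$ along $\C$ in the regular decomposition $\V = T(\Bo) \dotplus \C$.
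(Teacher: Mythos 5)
Your proof is correct and follows essentially the same route as the paper's: both hinge on the observation that $\tilde{\B} \leq \B$ gives $\Bo \leq \tilde{\B}^\perp$, so that $T|_{\Bo}^{-1}$ and $(T,\tilde{\B})^{-1}$ agree on $T(\Bo)$, and then apply this to $Qf$. The only cosmetic difference is that the paper writes $T|_{\Bo}^{-1} = \tilde{G}Q$ as operators on all of $\V$ and then uses $Q^2 = Q$, whereas you evaluate directly at $Qf \in T(\Bo)$, which is marginally cleaner.
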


\begin{proof}
  Since $\tilde{\B} \leq \B$, we have $\Bo \leq \tilde{\B}^{\perp}$. Hence the maps $T|_{\Bo}^{-1}$ and
  $\tilde{G} = (T, \tilde\B)^{-1}$ coincide on $\Bo$. Since $T|_{\Bo} \colon \Bo \to T(\Bo)$ is a bijection, we can compute   the restriction $T|_{\Bo}^{-1}$ by first applying a projector onto $T(\Bo)$ and then $\tilde{G}$. Hence
  $T|_{\Bo}^{-1} = \tilde{G} Q$, where $Q$ is again the projection onto $T(\Bo)$ along $\C$. Hence the generalized
  Green's operator is given by $G = T|_{\Bo}^{-1} Q = \tilde{G}Q^2 =
  \tilde{G}Q$. \qed
\end{proof}

Applying the previous proposition to Example \eqref{BP:ExSingPlus}
leads to the generalized Green's operator $x \cum - \cum x -
\frac{1}{2}(x^2+1) \cum_0^1 + \cum_0^1 x$. For a more involved example
illustrating the \maple\ functions in our package, we refer to the
Appendix.

\section{Outlook}

We are currently investigating in how far the composition of boundary
problems~\eqref{Eq:BComp} can be extended to generalized boundary
problems such that an analog of the ``reverse order law''~\eqref{Eq:GComp} holds. We can see in
the example below~\eqref{embedding} that for such a generalization, we
also have to modify the second component with the boundary
conditions. The question under which conditions a reverse order law
holds for different classes of generalized inverses---not necessarily
related to integro-differential operators---is extensively studied in
the literature, see for example \cite{Djordjevic2007} and the
references therein.

The search for generalized composition laws is intimately connected
with the question of ``embedding'' a singular boundary problem into a
regular problem of higher order.
For example
in~\cite{Rosenkranz2005}, the Green's operator~$G$ of the generalized
boundary problem $(\der^2, [\E_0\der, \E_1\der, \smash{\cum_0^1}],
[1])$ can be factored as $G = \tilde{G} \circ \der$ where $\tilde{G}$
is the standard Green's operator of the boundary problem $(\der^3,
[\E_0\der, \E_1\der, \smash{\cum_0^1}])$. Hence $\tilde{G} = G \circ
\cum_0^x$ and, assuming~\eqref{Eq:GComp} for the composition, also
\begin{equation} \label{embedding}
  (\der^3, [\E_0\der, \E_1\der, \smash{\cum_0^1}], [0]) = (\der, [\E_0],
  [0]) \circ (\der^2, [\E_0\der, \E_1\der,
  \smash{\cum_0^1}], [1]),
\end{equation}
since $\smash{\cum_0^x}$ is the Green's operator of the
boundary problem $(\der, [\E_0])$. The singular second-order problem
is thus embedded into a regular third-order one.

 Multi-point boundary problems can also be treated by our method,
yielding a suitable Green's operator just as in the classical
two-point setting. Generalizing the extraction procedure for Green's
functions is future work, see \cite{Agarwal1986} for
an analytic description of Green's functions for multi-point
boundary problems.

Going from LODEs to LPDEs, more drastic changes are necessary since
geometry enters the picture. For example, the Green's operator of the
inhomogeneous wave equation $u_{xx} - u_{tt} = f$ with homogeneous
Dirichlet data on the~$x$-axis integrates $f$ over a certain triangle
whose tip is at~$(x,t)$. In terms of the operator algebra, this means
one must incorporate the chain and substitution rule along with
explicit operators encoding change of variables. A first approach
along these lines, for the very simple case of linear coordinate
changes, was presented
in~\cite{RosenkranzRegensburgerTecBuchberger2009} and is currently
being refined. Studying singular boundary problems for LPDEs from a
symbolic point of view is also very interesting; see for
example~\cite{KrupchykTuomela2006} for a Gr\"obner bases approach to
compute the (hierarchy of) compatibility conditions for elliptic
boundary problems. It would be tempting to combine the tools of
involutive systems used there with the setting of operator rings used
here.


\newpage

\appendix

\section{Example}
Now we will give a detailed example for computations with generalized boundary problems. We introduced a new datatype $\gbp(T, \bc(b1, \ldots, bm), \es(f1, \ldots, fk))$, where $T$ and $(b1, \ldots, bm)$ again are a differential operator and boundary conditions and $(f1, \ldots, fm)$ is a basis of the exceptional space. We added the new procedures \textit{CompatibilityConditions, IsComplement} and \textit{Projector}, which will be explained later and extended the procedures \textit{GreensOperator} and \textit{IsRegular}. The first one now also computes the Green's Operator for a generalized boundary problem and the second one tests the condition $\Ker T \cap \Bo = \{0\}$ also for generalized boundary problems.

We consider the more complicated example
\begin{equation}
  \bvp{u'''' + u'' = f}{u'(0)= u''(0)=u''(\pi) = u'''(0) = u'''(\pi)=0.}
\end{equation}
We enter the boundary problem stated above and compute a fundamental system for the differential operator $T= \D{4} + \D{2}$.
\begin{center}
\begin{minipage}{0.999\textwidth}
\begin{framed}~\\[-1cm]
\begin{alltt}\small
> T := DIFFOP(0, 0, 1, 0, 1):
> b[1] := BOUNDOP(EVOP(0, EVDIFFOP(0, 1), EVINTOP())):
> b[2] := BOUNDOP(EVOP(0, EVDIFFOP(0, 0, 1), EVINTOP())):
> b[3] := BOUNDOP(EVOP(0, EVDIFFOP(0, 0, 0, 1), EVINTOP())):
> b[4] := BOUNDOP(EVOP(Pi, EVDIFFOP(0, 0, 1), EVINTOP())):
> b[5] := BOUNDOP(EVOP(Pi, EVDIFFOP(0, 0, 0, 1), EVINTOP())):
> Bp := BP(T, BC(b[1],b[2],b[3],b[4],b[5])):
> fs := FundamentalSystem(T);
\end{alltt}
\begin{equation*}
  [x,\; \sin(x), \;\cos(x), \;1]
\end{equation*}
~\\[-0.9cm]
\end{framed}\normalsize
\end{minipage}
\end{center}
Now we add another boundary condition $\operatorname{b[6]}$ in order to achieve uniqueness of solutions. This can be checked by considering the column rank of the evaluation matrix. We further verify that the compatibility conditions of both problems are the same.
\begin{center}
\begin{minipage}{0.999\textwidth}
\begin{framed}~\\[-1cm]\small
\begin{alltt}
> b[6] := BOUNDOP(EVOP(Pi, ZEROEDOP, EVINTOP(EVINTTERM(1,1)))):
> BpA := BP(T, BC(b[1],b[2],b[3],b[4],b[5],b[6])):
> IsRegular(BpA);
                             \[       true \] \negskip
> CompatibilityConditions(Bp);
\end{alltt}
\begin{equation*}
   \BC((\Ev{\operatorname{Pi}}\,.\,\A\,.\,(\sin(x)), \;(\Ev{\operatorname{Pi}})\,.\,\A\,.\,(\cos(x)))
\end{equation*}
\begin{alltt}
> CompatibilityConditions(BpA);
\end{alltt}
\begin{equation*}
   \BC((\Ev{\operatorname{Pi}}\,.\,\A\,.\,(\sin(x)), \;(\Ev{\operatorname{Pi}})\,.\,\A\,.\,(\cos(x)))
\end{equation*}
~\\[-0.9cm]
\end{framed} \normalsize
\end{minipage}
\end{center}
Now we enter a generalized boundary problem and check that our choice
$[1,\;x]$ as exceptional space is a complement of
$T({\Bo})$. Then we compute the projector $Q$ onto $T(\Bo)$ and the
Green's operator for the generalized boundary problem $(T,
[\operatorname{b[1]}, \operatorname{b[2]}, \operatorname{b[3]},
\operatorname{b[4]}, \operatorname{b[5]}, \operatorname{b[6]}],
[1,x])$,
\begin{center}
\begin{minipage}{0.999\textwidth}
\begin{framed}~\\[-1cm]\small
\begin{alltt}
> gBp := GBP(T, BC(b[1],b[2],b[3],b[4],b[5],b[6]), ES(1,x)):
> IsComplement(gBp);
                               \[       true \] \negskip
> Q := Projector(gBp):
\end{alltt}
\begin{equation*}
  Q := 1 - \frac{1}2 ((\Ev{\operatorname{Pi}})\,.\, A\, .\, (\sin(x))) + \Biggl(\Bigl(-\frac{\operatorname{Pi}}4 + \frac{x}2 \Bigr)\, .\, (\Ev{\operatorname{Pi}})\, .\, A\, .\, (\cos(x) \Biggr)
\end{equation*}
\begin{alltt}
> G := GreensOperator(gBp):
\end{alltt}
\end{framed}\normalsize
\end{minipage}
\end{center}
Finally we verify that the Green's operator $G$ fulfills the equation
$TG=Q$ and the six boundary conditions.
\begin{center}
\begin{minipage}{0.999\textwidth}
\begin{framed}~\\[-1cm] \small
\begin{alltt}
> simplify(SubtractOperator(MultiplyOperator(T, G), Q))
                                   \[    0  \] \negskip
> seq(simplify(MultiplyOperator(b[i], G)), i=1..6);
                                   \[ 0, 0, 0, 0, 0, 0  \] \negskip
\end{alltt} \normalsize
~\\[-1cm]
\end{framed}
\end{minipage}
\end{center}

\end{document}